\newcolumntype{P}[1]{>{\centering\arraybackslash}p{#1}}
\newtheorem{thm}{Theorem}{\bfseries}{\itshape}
\newtheorem{definition}[thm]{Definition}
\newaliascnt{lem}{thm}
\newtheorem{lem}[lem]{Lemma}{\bfseries}{\itshape}
\DeclareMathAlphabet\EuRoman{U}{eur}{m}{n}
\SetMathAlphabet\EuRoman{bold}{U}{eur}{b}{n}
\definecolor{myRed}{RGB}{200,0,0}
\definecolor{myBlue}{RGB}{0,0,200}
\newcommand{\bfalpha}{\boldsymbol\alpha}
\newcommand{\bfeta}{\boldsymbol\eta}
\newcommand{\lb}{\left}
\newcommand{\rb}{\right}
\newcommand{\lng}{\langle}
\newcommand{\rng}{\rangle}
\newcommand{\cB}{\mathcal{B}}
\newcommand{\cF}{\mathcal{F}}
\newcommand{\cS}{\mathcal{S}}
\newcommand{\bR}{\mathbb{R}}
\newcommand{\bC}{\mathbb{C}}
\newcommand{\bN}{\mathbb{N}}
\newcommand{\bfv}{\mathbf{v}}
\newcommand{\bfx}{\mathbf{x}}
\newcommand{\bfy}{\mathbf{y}}
\newcommand{\MSOLEVAL}{\mathrm{MSOLEVAL}}
\newcommand{\WMSOL}{\mathrm{WMSOL}}
\newcommand{\Ha}{\mathbf{H}}
\newcommand{\nHa}{\mathbf{nH}}
\newcommand{\bfA}{\mathbf{A}}
\newcommand{\bfM}{\mathbf{M}}
\newcommand{\bfN}{\mathbf{N}}
\newcommand{\bfI}{\mathbf{I}}
\newcommand{\bfr}{\mathbf{r}}
\newcommand{\bfp}{\mathbf{p}}
\newcommand{\WNW}{\mathrm{WNW}}
\newcommand{\NW}{\mathrm{NW}}
\begin{document}
\title{Hankel Matrices for Weighted \\ Visibly Pushdown Automata}
\author[1]{\normalsize Nadia Labai\thanks{Supported by the National Research Network RiSE (S114), and the LogiCS doctoral program (W1255) funded by the Austrian Science Fund (FWF).}}
\author[2]{\normalsize Johann A.~Makowsky\thanks{Partially supported by a grant of Technion Research Authority.}}
\affil[1]{Department of Informatics, Vienna University of Technology \texttt{\small labai@forsyte.at}}
\affil[2]{Department of Computer Science, Technion - Israel Institute of Technology {\texttt{\small janos@cs.technion.ac.il}}}	

\renewcommand\Authands{ and }
\date{}

\maketitle
\setcounter{footnote}{0}

\begin{abstract}
Hankel matrices (aka connection matrices) of word functions and graph
parameters have wide applications in automata theory, graph theory, and machine
learning. We give a characterization of real-valued functions on nested words
recognized by weighted visibly pushdown automata in terms of Hankel matrices
on nested words. This complements C. Mathissen's characterization in terms
of weighted monadic second order logic.
\end{abstract}

\section{Introduction and Background}
\subsection{Weighted Automata for Words and Nested Words}
Classical word automata can be extended to \emph{weighted} word automata by assigning weights from some numeric domain to their transitions, thereby having them assign values to their input words
rather than accepting or rejecting them.
Weighted (word) automata define the class of recognizable word functions, first introduced in the study of stochastic automata
by A. Heller \cite{ar:Heller1967}.
Weighted automata are used
in verification,
\cite{bk:Arnold1994,bk:McMillan1993},
in program synthesis,
\cite{pr:ChatterjeeDH09,pr:ChatterjeeHJS10},
in digital image compression,
\cite{ar:CulikKari93}, and 
speech processing,
\cite{ar:Mohri97,ar:Fernando97,ar:Allauzen04}. For a comprehensive survey, see the Handbook of Weighted Automata \cite{bk:DrosteKuichVogler2009}.
Recognizable word functions over commutative semirings $\cS$ 
were characterized using logic through the formalism of Weighted Monadic Second Order Logic ($\WMSOL$), \cite{ar:DrosteGastin05}, 
and the formalism of $\MSOLEVAL$
\footnote{This formalism was originally introduced in \cite{ar:CourcelleMakowskyRoticsDAM} for graph parameters.}, \cite{pr:LabaiMakowsky2013}.

Nested words and nested word automata are generalizations of words and finite automata, introduced by Alur and Madhusudan \cite{ar:AlurMadhusudan06}.
A \emph{nested word}  $nw \in \NW(\Sigma)$ over an alphabet $\Sigma$ is a sequence of linearly ordered positions, augmented with forward-oriented edges that do not cross, creating a nested structure.
In the context of formal verification for software, execution paths in procedural programs are naturally modeled by nested words
whose hierarchical structure captures calls and returns. Nested words also model annotated linguistic data and 
tree-structured data which is given by a linear encoding, such as HTML/XML documents. Nested word automata define the class of regular languages of nested words. The key feature of these automata is their ability to propagate hierarchical states along the augmenting edges, in addition to the states propagated along the edges of the linear order. We refer the reader to \cite{ar:AlurMadhusudan06} for details.
Nested words $nw \in \NW(\Sigma)$ can be (linearly) encoded as words over an extended tagged alphabet $\hat{\Sigma}$, where
the letters in  $\hat{\Sigma}$
specify whether the position is a call, a return, or neither (internal). 
Such encodings of regular languages of nested words give the class of \emph{visibly pushdown languages} over the tagged alphabet $\hat{\Sigma}$,
which lies between the parenthesis languages and deterministic context-free languages. 
The accepting pushdown automata for visibly pushdown languages push one symbol when reading 
a call, pop one symbol when reading a return, and only update their control when reading an internal symbol. Such automata are called \emph{visibly pushdown automata}.
Since their introduction, nested words and their automata 
have found applications in specifications for program analysis \cite{driscoll2011checking,harris2012secure,driscoll2012opennwa}, XML processing \cite{gauwin2011streamable,mozafari2012high}, and have motivated several theoretical questions, \cite{d2014symbolic,alur2007first,murawski2005third}.

Visibly pushdown automata and nested word automata were extended by assigning weights from a commutative semiring $\cS$ to their transitions as well.
Kiefer et al introduced \emph{weighted visibly pushdown automata}, and their equivalence problem was showed to be logspace reducible to polynomial identity testing, \cite{ar:Kiefer12}.
Mathissen introduced \emph{weighted nested word automata}, and proved 
a logical characterization of their functions using a modification of $\WMSOL$, \cite{ar:Mathissen08}.

\subsection{Hankel Matrices and Weighted Word Automata}
Given a word function $f: \Sigma^\star \rightarrow \cF$, its \emph{Hankel matrix} $\Ha_f \in \cF^{\Sigma^\star \times \Sigma^\star}$
is the infinite matrix whose rows and columns are indexed by words in $\Sigma^\star$ and $\Ha_f(u,v) = f(uv)$, where $uv$ is the concatenation of $u$ and $v$.
In addition to the logical characterizations, 
there exists a characterization of recognizable word functions via Hankel matrices, by Carlyle and Paz \cite{ar:CarlylePaz1971}.
\begin{thm}[Carlyle and Paz, 1971]
\label{th:carlyle_paz}
A real-valued word function $f$ is recognized by a weighted (word) automaton iff $\Ha_f$ has finite rank.
\end{thm}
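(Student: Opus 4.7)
The plan is to prove the two directions separately, following the classical Fliess-style argument.

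For the easy direction, assume $f$ is computed by a weighted automaton with $n$ states, so that there exist vectors $\bfalpha,\bfeta \in \bR^n$ and a monoid homomorphism $\bfM : \Sigma^\star \to \bR^{n\times n}$ with $f(w) = \bfalpha^T \bfM(w) \bfeta$. Then
\[
  \Ha_f(u,v) = f(uv) = \bfalpha^T \bfM(u)\bfM(v)\bfeta,
\]
so if $\bfA \in \bR^{\Sigma^\star \times n}$ has rows $\bfalpha^T \bfM(u)$ and $\bfC \in \bR^{n \times \Sigma^\star}$ has columns $\bfM(v)\bfeta$, we obtain $\Ha_f = \bfA \bfC$, giving $\mathrm{rank}(\Ha_f) \le n$.

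For the converse, suppose $\Ha_f$ has finite rank $n$. I would pick words $u_1,\dots,u_n \in \Sigma^\star$ whose corresponding rows $\bfr_i := \Ha_f(u_i,\cdot)$ form a basis of the row space of $\Ha_f$; without loss of generality $u_1 = \varepsilon$ (otherwise extend and re-select). For every word $w$ the row $\Ha_f(w,\cdot)$ is a unique linear combination $\sum_i c_i(w)\,\bfr_i$. I would then define the automaton as follows: states are $\{1,\dots,n\}$, the initial vector $\bfalpha \in \bR^n$ holds the coefficients expressing $\Ha_f(\varepsilon,\cdot)$ in the basis (so $\bfalpha = \bfeta_1$ if $u_1 = \varepsilon$), the final vector $\bfeta \in \bR^n$ has entries $\eta_i = f(u_i) = \Ha_f(u_i,\varepsilon)$, and for each $a \in \Sigma$ the transition matrix $\bfM(a)$ has entry $\bfM(a)_{i,j}$ equal to the $j$-th coefficient in the expansion $\Ha_f(u_i a,\cdot) = \sum_j \bfM(a)_{i,j}\,\bfr_j$. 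These coefficients are well defined precisely because the $\bfr_i$ form a basis.

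The core verification, which I expect to be the main obstacle, is to show by induction on $|w|$ that $\Ha_f(u_i w,\cdot) = \sum_j (\bfM(w))_{i,j}\,\bfr_j$, where $\bfM(w)$ is the product of the transition matrices along $w$. The base case $w = \varepsilon$ is immediate and the inductive step uses the definition of $\bfM(a)$ together with associativity of matrix multiplication, exploiting the key \emph{shift invariance} of Hankel matrices: the row indexed by $u_i w$ depends only on the row indexed by $u_i$ concatenated with $w$ on the right, which is a linear operation on the row space. Evaluating at the empty column and at $u_1 = \varepsilon$ then yields
\[
  f(w) = \Ha_f(\varepsilon, w) = \sum_j (\bfM(w))_{1,j}\, f(u_j) = \bfalpha^T \bfM(w) \bfeta,
\]
showing that the constructed weighted automaton with $n$ states computes $f$. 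The delicate point is to justify that the row-space operation ``append $a$ on the right'' descends to a well-defined linear map on the quotient by the kernel of $f$'s Hankel structure; this is where finiteness of the rank is essential and where the argument truly uses more than bookkeeping.
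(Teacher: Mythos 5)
Your argument is correct, and it is the classical Fliess/Carlyle--Paz construction; but note that the paper itself gives \emph{no} proof of this statement --- \autoref{th:carlyle_paz} is quoted as background from Carlyle and Paz (1971), so there is no in-paper proof to match yours against. On the merits: the factorization $\Ha_f = \bfA\bfC$ correctly gives rank $\leq n$ in the easy direction, and in the converse the ``shift invariance'' you single out is exactly the right crux, and it is less delicate than your closing paragraph suggests: the row indexed by $xa$ is obtained from the row indexed by $x$ by precomposing the column index with $v \mapsto av$, i.e., $\Ha_f(xa, v) = \Ha_f(x, av)$, so appending a letter commutes with taking linear combinations of rows, and your transition matrices are well defined directly from uniqueness of coordinates in the basis $\bfr_1, \ldots, \bfr_n$ --- no quotient construction is needed. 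Two small repairs: dispose of the degenerate case where the row of $\varepsilon$ is zero (then $f \equiv 0$ and a one-state automaton suffices) before assuming $u_1 = \varepsilon$ can be put in the basis, and fix the notational slip $\bfalpha = \bfeta_1$, where you mean the first standard unit vector, not the final vector $\bfeta$.

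The instructive comparison is with the proof the paper \emph{does} give, for the nested-word analogue (\autoref{th:vpa2finite} and \autoref{th:finite2vpa}). Your route leans entirely on the fact that any word splits at any position into two words, which is what makes ``append $a$ on the right'' an operation on the index set of $\Ha_f$. For well-nested words this fails: a prefix of a well-nested word need not be well-nested, so a call letter $c$ cannot be appended to a row index of $\nHa_f$, and no analogue of your inductive step exists for call/return symbols. The paper instead recovers the call and return matrices \emph{jointly}, from SVD expansions of the auxiliary matrices $\bfN_{c,r}$ with entries built from the values $f(c\, w_{i,j}\, r)$ (Definition \ref{def:call_ret_def} and \autoref{lem:matrices}), at the cost of an $n^2$ rank bound rather than $n$, and of restricting the value field to $\bR$ or $\bC$ where the SVD theorem holds. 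Your proposal thus exhibits precisely the decomposition property whose absence motivates the paper's main technical work, which is why it is a fine proof of \autoref{th:carlyle_paz} but would not transfer to \autoref{th:nested}.
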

The theorem was originally stated using the notion
of external function rank, but the above formulation is equivalent.
Multiplicative words functions were characterized by Cobham \cite{ar:Cobham1978} as exactly those with a Hankel matrix of rank $1$.

Hankel matrices proved useful also in the study of graph parameters. Lov\'asz introduced a kind of Hankel matrices for graph parameters \cite{ar:Lovasz07}
which were used
to study real-valued graph parameters and their relation to partition functions,
\cite{ar:FreedmanLovaszSchrijver07,bk:Lovasz-hom}.
In \cite{ar:GodlinKotekMakowsky08},
the definability of graph parameters in monadic second order logic was related to the rank of their Hankel matrices.
Meta-theorems involving logic, such as Courcelle's theorem and generalizations thereof \cite{bk:DowneyFellows99,bk:CourcelleEngelfriet2011,pr:CourcelleMakowskyRoticsWG98,ar:MakowskyTARSKI}, were made logic-free by replacing their definability conditions with conditions on Hankel matrices, \cite{ar:LabaiMakowsky2014,msc:Labai,ar:LabaiMakowskyJCSS}.

\subsection{Our Contribution}
The goal of this paper is to prove a characterization of the functions recognizable by weighted visibly pushdown automata (WVPA), 
called here \emph{recognizable nested word functions}, via Hankel matrices.
Such a characterization would nicely fill the role of the Carlyle-Paz theorem in the words setting, complementing
results that draw parallels between recognizable word functions and nested word functions, 
such as the attractive properties of closure and decidability the settings share \cite{ar:AlurMadhusudan06},
and the similarity between the $\WMSOL$-type formalisms used to give their logical characterizations.

The first challenge is in the choice of the Hankel matrices at hand.
A naive straightforward adaptation of the Carlyle-Paz theorem to the setting of nested words would involve Hankel matrices 
for words over the extended alphabet $\hat{\Sigma}$ with the usual concatenation operation on words.
However, then we would have functions recognizable by WVPA with Hankel matrices of infinite rank.
Consider the Hankel matrix of the characteristic function 
of the language of balanced brackets, also known as the Dyck language. 
This language is not regular, so its characteristic function is not recognized by a weighted word automaton.
Hence, by the \hyperref[th:carlyle_paz]{Carlyle-Paz theorem}, its Hankel matrix would have infinite rank despite the fact its 
encoding over a tagged alphabet 
is recognizable by VPA, hence also by WVPA.
\subsubsection*{Main results}
We introduce \emph{nested Hankel matrices} over \emph{well-nested words} (see \autoref{sec:prelim}) to overcome the 
point described above
and prove 
the following characterization of WVPA-recognizable functions of well-nested words:
\begin{thm}[Main Theorem]
\label{th:nested}\ \\
Let $\cF = \bR$ or $\cF = \bC$, and let $f$ be an $\cF$-valued function on well-nested words. Then $f$ is recognized by a weighted visibly pushdown automaton
with $n$ states iff the nested Hankel matrix $\nHa_f$ has rank $\leq n^2$. 
\end{thm}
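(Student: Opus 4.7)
For the forward direction, the plan is to associate to every well-nested word $w$ a \emph{transition matrix} $M_w \in \cF^{n\times n}$ whose $(q,q')$-entry is the total weight of all runs of the given WVPA on $w$ that start in control state $q$ and end in control state $q'$. Well-nestedness of $w$ guarantees that every push is matched by a pop, so the entries of $M_w$ do not depend on stack symbols lying strictly below the starting level, and a routine induction on the structure of a well-nested word (empty word, internal letter, matched call/return pair wrapping a well-nested subword, and concatenation) establishes both $M_{uv}=M_u M_v$ and a compositional formula expressing $M_w$ in terms of the internal, call and return transition weights of the WVPA. By the definition of the nested Hankel matrix, an entry $\nHa_f(u,v)$ is determined by the pair $(M_u,M_v)$ through a bilinear expression, so $w\mapsto M_w$ factors $\nHa_f$ through the $n^2$-dimensional space $\cF^{n\times n}$, yielding $\mathrm{rank}(\nHa_f)\leq n^2$.

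For the converse, I would mimic the Carlyle--Paz construction in a matrix-valued variant. Starting from $\mathrm{rank}(\nHa_f)=r\leq n^2$, fix a basis of $r$ rows of $\nHa_f$ and assign to each well-nested word $u$ its coordinate vector $\phi(u)\in \cF^{r}$. The nested concatenation of well-nested words induces an action of each word on these coordinate vectors, and the critical point is that, because $\nHa_f$ is a \emph{nested} Hankel matrix, this action is matrix-shaped: the image of $\phi$ embeds into a subalgebra of $\cF^{n\times n}$ generated by the matrices assigned to individual internal letters and to matched call/return pairs. The $n$ hidden indices of this matrix algebra become the states of the WVPA; the weights on internal transitions are read off from how $\phi$ transforms when an internal letter is appended, while the push/pop weights of a call and its matching return are read off from the action of that call letter (pushing its matrix factor onto the stack) and the matching return letter (popping it). The initial and accepting weight vectors $\alpha,\beta\in\cF^{n}$ are extracted from the bilinear form $f(w)=\alpha^{T}M_w\beta$ implicit in the factorization of $\nHa_f$.

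The principal obstacle is this matrix-structure step. In the classical word setting the signatures of words form an $n$-dimensional $\cF$-module and nothing more is required, whereas here the rank bound is $n^2$, and the delicate point is to show that one genuinely recovers an $n\times n$ matrix representation rather than merely an $n^2$-dimensional module that fails to split as a tensor product. I expect to handle this by exploiting the fact that a WVPA processes a matched pair $c\cdots r$ as an operation $M\mapsto N_c\cdot M\cdot N_r$ for transition matrices $N_c,N_r$ depending only on the letters $c$ and $r$, so the span of signatures must be closed under left and right multiplication by such building blocks. Together with the bilinear structure of the nested Hankel matrix, this closure should force a tensor decomposition of the rank-$n^2$ signature space into two factors of dimension $n$, yielding the $n$ states required for the WVPA.
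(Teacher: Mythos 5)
Your forward direction is sound and is essentially the paper's own proof of \autoref{th:vpa2finite}: one checks $\bfM^{(A)}_{uv}=\bfM^{(A)}_u\cdot\bfM^{(A)}_v$ for well-nested $u,v$, so $\nHa_f(u,v)=\bfalpha^T\bfM^{(A)}_u\bfM^{(A)}_v\bfeta$ factors bilinearly through $\cF^{n\times n}$; the paper phrases this by exhibiting $n^2$ spanning row vectors $\bfv^{(i,j)}(w)=\bfalpha^T\bfA^{(i,j)}\bfM^{(A)}_w\bfeta$, which is the same argument.

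The converse, however, contains a genuine gap, located exactly where you flag ``the principal obstacle'': you assert that closure considerations ``should force a tensor decomposition'' of the rank-$\leq n^2$ signature space into two factors of dimension $n$, but you give no mechanism for this, and the closure argument you sketch is circular --- the matrices $N_c,N_r$ whose left and right multiplications you invoke exist only once the WVPA has been constructed, whereas from $\nHa_f$ alone you only ever see the \emph{joint} operation $u\mapsto cur$ for a matched call/return pair, never a separate action of $c$ or of $r$. (Indeed, an $n^2$-dimensional module with a bilinear form has no intrinsic reason to split as $\cF^n\otimes\cF^n$; this is precisely what must be manufactured, not observed.) The paper does not discover this structure but \emph{imposes} it, via two concrete moves absent from your plan. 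First (Definitions \ref{def:vectors} and \ref{def:internal}), it indexes the $n^2$ spanning rows as $w_{i,j}$ with $i,j\in[n]$ and simply \emph{defines} their signatures to be scaled matrix units, $\bfM_{w_{i,j}}=\beta_{i,j}\bfA^{(i,j)}$ with $\beta_{i,j}=f(w_{i,j})f(w_{1,j})^{-1}$, extracting $\bfalpha,\bfeta$ from a rank-one factorization $\bfN=\bfx\bfy^T$ of $\bfN(i,j)=f(w_{1,j})$; internal letters are then handled by the standard Carlyle--Paz expansion, as in your sketch. Second, and decisively (Definition \ref{def:call_ret_def} and \autoref{lem:matrices}), for each pair $(c,r)$ it forms the $n\times n$ matrix $\bfN_{c,r}(i,j)=f(cw_{i,j}r)/\beta_{i,j}$ and uses its SVD expansion $\bfN_{c,r}=\sum_{k=1}^{n}\bfp_{c,k}(\bfp_{r,k})^T$ to split the joint dependence on $(c,r)$ into separate call and return matrices: the summation index $k$ becomes the stack symbol $\gamma$ (giving $|\Gamma|\leq n$), and the factors $\bfp_{c,\gamma},\bfp_{r,\gamma}$, rescaled by $\bfalpha(\gamma)$ and $\bfeta(\gamma)$, yield $\bfM_{call}^{(c,\gamma)}$ and $\bfM_{ret}^{(r,\gamma)}$ so that $f(cw_{i,j}r)=\bfalpha^T\bigl(\sum_{\gamma=1}^{n}\bfM_{call}^{(c,\gamma)}\bfM_{w_{i,j}}\bfM_{ret}^{(r,\gamma)}\bigr)\bfeta$. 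This SVD step is where the hypothesis $\cF=\bR$ or $\cF=\bC$ is actually used --- the paper stresses that this is why the theorem is not stated over semirings --- and the fact that your proposal never identifies where that restriction enters is a symptom of the missing ingredient. Without some explicit factorization of the pair-matrices $\bfN_{c,r}$ into outer products with a shared index, your construction does not get off the ground.
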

As opposed to the characterizations of word functions, which allow $f$ to have values over a semiring, we require that $f$ is over $\bR$ or $\bC$. 
This is due to the second challenge, which stems from the fact that in our setting of functions of well-nested words, 
the helpful decomposition properties exhibited by Hankel matrices for word functions are absent. 
This is because, as opposed to words, well-nested words cannot be 
split in arbitrary positions and result in two well-nested words.
Thus, we use the \hyperref[th:SVD]{singular value decomposition (SVD) theorem}, see, e.g., \cite{bk:Golub12}, which is valid only over $\bR$ and $\bC$.

\subsubsection*{Outline}
In \autoref{sec:prelim} we complete the background on well-nested words and weighted visibly pushdown automata, and introduce nested Hankel matrices. The rather technical proof of \autoref{th:nested} is given in \autoref{sec:nested}. In \autoref{sec:learn} we discuss the applications of \autoref{th:nested} to learning theory.
In \autoref{sec:conc} we briefly discuss limitations of our methods and possible extensions of our characterization. 

\section{Preliminaries}
\label{sec:prelim}
For the remainder of the paper, we assume that $\cF$ is $\bR$ or $\bC$.
Let $\Sigma$ be a finite alphabet. For $\ell \in \bN^+$, we denote the set $\{1,\dots,\ell\}$ by $[\ell]$. For a matrix or vector $\bfN$, denote its transpose by $\bfN^T$.
Vectors are assumed to be column vectors unless stated otherwise.
\subsection{Well-Nested Words}
We follow the definitions in \cite{ar:AlurMadhusudan06} and \cite{ar:Mathissen08}.
\label{def:wnw}
A \emph{well-nested word} over $\Sigma$ is a pair $(w,\nu)$ where $w \in \Sigma^\star$ of length $\ell$
and $\nu$ is a matching relation for $w$. A \emph{matching relation}\footnote{The original definition of nested words allowed ``dangling" edges. We will only be concerned with nested words that are well-matched.} 
 for a word of length $\ell$ is a set of edges $\nu \subset [\ell] \times [\ell]$
such that the following holds:
\begin{enumerate}
\setlength\itemsep{0.5em}
\item 
If $(i,j) \in \nu$, then $i < j$.
\item
Any position appears in an edge of $\nu$ at most once: 
For $1 \leq i \leq \ell$,\\ 
$| \{ j \mid (i,j) \in \nu\} | \leq 1$ and $| \{ j \mid (j,i) \in \nu\} | \leq 1$
\item
If $(i,j), (i',j') \in \nu$, then it is not the case that $i < i' \leq j < j'$. That is, the edges do not cross.
\end{enumerate}
Denote the set of well-nested words over $\Sigma$ by $\WNW(\Sigma)$.

Given positions $i,j$ such that $(i,j) \in \nu$, position $i$ is a \emph{call} position and position $j$ is a \emph{return} position.
Denote $\Sigma_{{call}} = \{\lng s  \mid s \in \Sigma\}$,\ $\Sigma_{{ret}} = \{s \rng  \mid s \in \Sigma\}$, 
and $\hat{\Sigma} = \Sigma_{{call}} \cup \Sigma_{{ret}} \cup \Sigma_{{int}}$ where $\Sigma_{{int}} = \Sigma$ and is disjoint from $\Sigma_{{call}}$ and $\Sigma_{{ret}}$.
By viewing calls as opening parentheses and returns as closing parentheses, one can define an encoding 
taking nested words over $\Sigma$ to words
over $\hat{\Sigma}$ by assigning to a position labeled $s \in \Sigma$:
\begin{itemize}
\item 
the letter $\lng s$, if it is a call position,
\item
the letter $s \rng$, if it is a return position,
\item
the same letter $s$, if it is an internal position.
\end{itemize}
We denote this encoding by $nw\_w: \WNW(\Sigma) \rightarrow \hat{\Sigma}^\star$ and give an example in \autoref{fig:nested}. 
Note that any parentheses appearing in such an encoding will be well-matched (balanced) parentheses.
Denote its partial inverse function, defined only for words with well-matched parentheses, by $w\_nw : \hat{\Sigma}^\star \rightarrow \WNW(\Sigma)$.
See \cite{ar:AlurMadhusudan06} for details. We will freely pass between the two forms.

\begin{figure}%
\centering
\begin{tikzpicture}[scale=1]
\tikzset{VertexStyle/.append style = {line width = 0.5pt, minimum size = 14pt}}
\tikzstyle{every node} = [draw,shape=circle,color=white]
	\node (1) at (0,0){${\color{black}b}$};
	\node (2) at (1,0){${\color{black}a}$};
	\node (3) at (2,0){${\color{black}a}$};
	\node (4) at (3,0){${\color{black}b}$};
	\node (5) at (4,0){${\color{black}b}$};
	\path [thick, ->] (1) edge (2);
	\path [thick, ->] (2) edge (3);
	\path [thick, ->] (3) edge (4);
	\path [thick, ->] (4) edge (5);
	\path [dashed, thick, out=50, in=130, ->] (2) edge (5);
	\path [dashed, thick, out=50, in=130, ->] (3) edge (4);
	\begin{scope}
	\tikzstyle{every node} = [shape=circle]
	\node at (7,0){$b \lng a \lng ab \rng b\rng$};	
	\end{scope}
\end{tikzpicture}
\caption{On the left, a well-nested word, where the successor relation of the linear order 
is in bold edges, the matching relation is in dashed edges. On the right, its encoding as a word over a tagged alphabet. }
\label{fig:nested}
\end{figure}

Given a function $f : \WNW(\Sigma) \rightarrow \cF$ on well-nested words,
one can naturally define a corresponding function $f': \hat{\Sigma}^\star \rightarrow \cF$ on words with well-matched parentheses by setting $f'(w) = f(w\_nw(w))$. 
We will denote both functions by $f$. 

\subsection{Nested Hankel Matrices}
Given a function on well-nested words $f : \WNW(\Sigma) \rightarrow \cF$, define its \emph{nested Hankel matrix} $\nHa_f$
as the infinite matrix whose rows and columns are indexed by \emph{words} over $\hat{\Sigma}$ with \emph{well-matched parentheses}, and
$\nHa_f(u,v) = f(uv)$. That is, the entry at the row labeled with $u$ and the column labeled with $v$ is the value $f(uv)$.
A nested Hankel matrix $\nHa_f$ has finite rank if there is a finite set of rows in $\nHa_f$ that linearly span it.
We stress the fact that $\nHa_f$ is defined over words whose parentheses are well-matched, as this is crucial for the proof of \autoref{th:nested}.

\begin{figure}%
\centering
\begin{tabular}{P{1.15cm}P{.9cm}P{.9cm}P{.9cm}P{.9cm}P{.9cm}P{1.15cm}P{.9cm}}
& $\varepsilon$ & $a$ & $\lng a a \rng$ & $aa$ & $\lng a a a \rng$ & $\lng a \lng a a \rng a \rng$ & $\cdots$ \\
$\varepsilon$	& $0$ & $0$ & $1$ & $0$ & $1$ & $2$ & $\cdots$ \\
$a$	& $0$ & $0$ & $1$ & $0$ & $1$ & $2$ & $\cdots$ \\
$\lng a a \rng$	& $1$ & $1$ & $2$ & $1$ & $2$ & $3$ & $\cdots$ \\
$aa$	& $0$ & $0$ & $1$ & $0$ & $1$ & $2$ & $\cdots$\\
$\lng a a a \rng$	& $1$ & $1$ & $2$ & $1$ & $2$ & $3$ & $\cdots$ \\
$\lng a \lng a a \rng a \rng$	& $2$ & $2$ & $3$ & $2$ & $3$ & $4$ & $\cdots$ \\
$\vdots$	& $\vdots$ & $\vdots$ & $\vdots$ & $\vdots$ & $\vdots$ & $\vdots$ & 
\end{tabular}

\caption{The nested Hankel matrix $\nHa_f$. Note that $\nHa_f$ has rank $2$.}%
\label{fig:hankel}%
\end{figure}

As an example, consider the function $f$ which counts the number of pairs of parentheses in a well-nested word over the alphabet $\Sigma = \{a\}$. Then the corresponding 
word function is on words over the tagged alphabet $\hat{\Sigma} = \{a ,\lng a, a \rng \}$.
In \autoref{fig:hankel} we see (part of) the corresponding nested Hankel matrix $\nHa_f$ with labels on its columns and rows.

\subsection{Weighted Visibly Pushdown Automata}
For notational convenience, now let $\Sigma = \Sigma_{{call}} \cup \Sigma_{{ret}} \cup \Sigma_{{int}}$.
We follow the definition given in \cite{ar:Kiefer12}. 
\begin{definition}[Weighted visibly pushdown automata]
An \emph{$\cF$-weighted VPA} on $\Sigma$ is a tuple 
$
A = (n, \bfalpha, \bfeta, \Gamma, \mathrm{M})
$
where 
\begin{itemize}
\item 
$n \in \bN^+$ is the number of states,
\item
$\bfalpha,\bfeta \in \cF^n$ are initial and final vectors, 
respectively,
\item
$\Gamma$ is a finite stack alphabet, and
\item
$\mathrm{M}$ are matrices in $\cF^{n \times n}$ defined as follows.
\begin{itemize}
\item[] 
For every $\gamma \in \Gamma$ and every $c \in \Sigma_{{call}}$, the matrix $\bfM_{{call}}^{(c,\gamma)} \in \cF^{n \times n}$
is given by 
\begin{align*}
{\bfM_{{call}}^{(c, \gamma)}}({i,j}) = &\text{ 
the weight of a $c$-labeled transition from} \\& \text{ state $i$ to state $j$ that pushes $\gamma$ onto the stack.}
\end{align*}
\item[]
The matrices $\bfM_{{ret}}^{(r,\gamma)} \in \cF^{n \times n}$ are given similarly for every $r \in \Sigma_{{ret}}$, and the matrices 
$\bfM_{{int}}^{(s)} \in \cF^{n \times n}$ are given similarly for every $s \in \Sigma_{{int}}$.
\end{itemize}
\end{itemize} 
\end{definition}

\begin{definition}[Behavior of weighted VPA]
\label{def:behavior}
Let $A = (n, \bfalpha, \bfeta, \Gamma, \mathrm{M})$ be an $\cF$-weighted VPA on $\Sigma$. 
For a well-nested word $u \in \WNW(\Sigma)$, the automaton $A$ inductively computes a matrix $\bfM^{(A)}_u \in \cF^{n \times n}$ for $u$ in the following way.
\begin{itemize}
\item Base cases:
\begin{flalign*}
\bfM^{(A)}_\varepsilon = \bfI, \ \text{ and } \ \bfM^{(A)}_s = \bfM_{{int}}^{(s)} \ \text{ for $s \in \Sigma_{{int}}$}.
\end{flalign*}
\item Closure:
\begin{center}
\begin{tabular}{ll}
$\bfM^{(A)}_{uv} \ = \bfM^{(A)}_{u} \cdot \bfM^{(A)}_{v}$ & for $u,v \in \WNW(\Sigma)$, and \\
$\bfM^{(A)}_{c u r} = \sum_{\gamma \in \Gamma}{\bfM_{{call}}^{(c, \gamma)} \cdot \bfM^{(A)}_{u} \cdot \bfM_{{ret}}^{(r, \gamma)}}$ &
for $c \in \Sigma_{{call}}$ and $r \in \Sigma_{{ret}}$.
\end{tabular}
\end{center}

\end{itemize}

The \emph{behavior} of $A$ is the function $f_A: \WNW(\Sigma) \rightarrow \cF$ where 
$$
f_A(u) = \bfalpha^T \cdot \bfM^{(A)}_{u} \cdot \bfeta
$$
\end{definition} 
A function $f: \WNW(\Sigma) \rightarrow \cF$ is \emph{recognizable by weighted VPA} if it is the behavior of some weighted VPA $A$.
\section{Applications in Computational Learning Theory}
\label{sec:learn}

A passive learning algorithm for classical automata is an algorithm which is given a set of strings accepted by the target automaton (positive examples) and a set of 
strings rejected by the target automaton (negative examples), and is required to output an automaton which is consistent with the set of examples.
It is well known that in a variety of passive learning models, such as Valiant's PAC model, \cite{ar:Valiant84}, and the mistake bound models of Littlestone and  Haussler et al, \cite{ar:littlestone88,ar:Haussler88}, it is intractable to learn or even approximate classical automata, \cite{ar:Gold78,ar:Angluin78,ar:Pitt93}. However, the problem becomes tractable when the learner is allowed to make membership and equivalence queries, as in the active model of learning introduced by Angluin, \cite{ar:Angluin78,ar:Angluin87}. 
This approach was extended to weighted automata over fields, \cite{ar:bergadano96}.

The problem of learning \emph{weighted} automata is of finding a weighted automaton
which closely estimates some target function, by considering examples consisting of pairs of strings with their value. 
The development of efficient learning techniques for weighted automata was immensely motivated 
by the abundance of their applications, with
many of the techniques exploiting the relationship between weighted automata and their Hankel matrices, \cite{ar:Beimel2000,pr:HabrardOncina06,pr:BishtBshoutyMazzawi06}.

\subsection{Learning Weighted Visibly Pushdown Automata}
The proof of our \autoref{th:nested} suggests a template 
of learning algorithms for weighted visibly pushdown automata, with the difficult part being the construction of the matrices that correspond to call and return symbols. 
The proof of \autoref{lem:matrices} spells out the construction of these matrices,
given an algorithm for finding SVD expansions (see \autoref{subs:r2a}) and a spanning set of the nested Hankel matrix.
To the best of our knowledge, learning algorithms for weighted visibly pushdown automata have not been proposed so far.

In recent
years, the spectral method of Hsu et al \cite{ar:hsu12} for learning hidden Markov models, which relies on the SVD of a Hankel matrix, has driven much follow-up research, see the
survey \cite{ar:Balle15}.
Balle and Mohri combined spectral methods with constrained matrix completion algorithms to learn arbitrary weighted automata, \cite{ar:Balle12}. 
We believe the possibility of developing spectral learning algorithms for WVPA is worth exploring in more detail.

Lastly, we should note that one could employ existing algorithms to produce a weighted automaton from a nested Hankel matrix, if it is viewed as a partial Hankel matrix for a word function.
However, any automaton which is consistent with the matrix will have as many states as the rank of the nested Hankel matrix, \cite{ar:CarlylePaz1971,ar:fliess74}. This may be less than satisfying when considering how, in contrast, \autoref{th:nested} assures the existence of a weighted visibly pushdown automaton with $n$ states, given a nested Hankel matrix of rank $\leq n^2$.
 This discrepancy fundamentally depends on the \hyperref[th:SVD]{SVD Theorem}.

\section{Extension to Semirings}
\label{sec:conc}
The proof of \autoref{th:nested} relies on the 
\hyperref[th:SVD]{SVD theorem}, which, in particular, 
assumes the existence of an inverse with respect to addition.
Furthermore, notions of orthogonality, rank, and norms do not readily transfer to the semiring setting.
Thus it is not clear what an analogue to the SVD theorem would be in the context of semirings, nor whether it could exist.
Therefore the proof of \autoref{th:nested} cannot be used to characterize nested word functions recognized by WVPA over semirings. 

However, in the special case of the tropical semirings, De Schutter and De Moor proposed an extended max algebra corresponding to $\bR$, called the \emph{symmetrized max algebra}, 
and proved an analogue SVD theorem for it, \cite{ar:DeSchutter97}. See also \cite{ar:DeSchutter02} for an extended presentation. These results suggest
a similar Hankel matrix based characterization for WVPA-recognizable nested word functions may be possible over the tropical semirings.
This would be beneficial in situations where we have a function that has a nested Hankel matrix of infinite rank when interpreted over $\bR$, but has finite rank when it is interpreted over a tropical semiring. It is easy to verify that any function on well-nested words which is maximizing or minimizing with respect to concatenation would fall in this category.

\section{The Characterization of WVPA-Recognizability}
\label{sec:nested}
In this section we prove both directions of \autoref{th:nested}.

\subsection{Recognizability Implies Finite Rank of Nested Hankel Matrix}
This is the easier direction of \autoref{th:nested}.
First we need a definition. For $\ell,m \in [n]$, define the matrix $\bfA^{(\ell,m)} \in \cF ^{n \times n}$ as having the value $1$ in the entry $(\ell,m)$ and zero in all other entries.
That is,
$$
\bfA^{(\ell,m)}(i,j) = 
\begin{cases}
1, & \text{ if } (i,j) = (\ell,m)\\
0, & \text{ otherwise} 
\end{cases}
$$
Obviously, for any matrix $\bfM \in \cF ^{n \times n}$ with entries $\bfM(i,j) = m_{ij}$ we have \\ $\bfM = \sum_{i,j \in [n]}{m_{ij}\bfA^{(i,j)}}$.
\begin{thm}
\label{th:vpa2finite}
Let $f: \WNW(\Sigma) \rightarrow \cF$ be recognized by a weighted visibly pushdown automaton $A$ with $n$ states. 
Then the nested Hankel matrix $\nHa_f$ has rank $\leq n^2$.
\end{thm}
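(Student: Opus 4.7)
The plan is to exploit the multiplicativity of the matrix $\bfM^{(A)}_u$ under concatenation of well-nested words, combined with the fact that these matrices live in the $n^2$-dimensional space $\cF^{n \times n}$.

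First I would note the key closure property underlying the indexing of $\nHa_f$: if $u, v \in \hat\Sigma^\star$ both have well-matched parentheses, then $uv$ also has well-matched parentheses, so $uv$ corresponds to a well-nested word. Combined with the first Closure axiom in \autoref{def:behavior}, this yields
$$
\bfM^{(A)}_{uv} \;=\; \bfM^{(A)}_{u}\cdot \bfM^{(A)}_{v}
$$
for every pair of row and column labels of $\nHa_f$. Therefore
$$
\nHa_f(u,v) \;=\; f(uv) \;=\; \bfalpha^T \bfM^{(A)}_u \bfM^{(A)}_v \bfeta.
$$

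Next I would decompose each matrix $\bfM^{(A)}_u$ in the canonical basis of $\cF^{n\times n}$ given by the matrices $\bfA^{(i,j)}$ introduced just before the theorem. Writing $\bfM^{(A)}_u = \sum_{i,j \in [n]} m^u_{ij}\,\bfA^{(i,j)}$, the entry of the Hankel matrix becomes
$$
\nHa_f(u,v) \;=\; \sum_{i,j \in [n]} m^u_{ij} \cdot \bigl(\bfalpha^T \bfA^{(i,j)} \bfM^{(A)}_v \bfeta\bigr).
$$
For each fixed pair $(i,j) \in [n]\times[n]$, the function $v \mapsto g_{i,j}(v) := \bfalpha^T \bfA^{(i,j)} \bfM^{(A)}_v \bfeta$ is a row vector indexed by the columns of $\nHa_f$. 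The display above says that the row of $\nHa_f$ indexed by $u$ equals $\sum_{i,j} m^u_{ij} \, g_{i,j}$, independently of $u$ except for the scalar coefficients $m^u_{ij}$.

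Finally I would conclude: every row of $\nHa_f$ lies in the span of the $n^2$ fixed vectors $\{g_{i,j} : i,j \in [n]\}$, so the row space of $\nHa_f$ has dimension at most $n^2$, giving $\mathrm{rank}(\nHa_f) \leq n^2$. There is no real obstacle here; the only subtle point to handle cleanly is the observation that the multiplicative law $\bfM^{(A)}_{uv} = \bfM^{(A)}_u \cdot \bfM^{(A)}_v$ applies to \emph{all} index pairs of $\nHa_f$ precisely because we restricted the index set to words with well-matched parentheses. This restriction is what makes the proof work and motivates the definition of the nested Hankel matrix.
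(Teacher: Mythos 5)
Your proof is correct and coincides with the paper's own argument: your spanning vectors $g_{i,j}$ are exactly the paper's $\bfv^{(i,j)}(w) = \bfalpha^T \bfA^{(i,j)} \bfM^{(A)}_w \bfeta$, and your coefficients $m^u_{ij} = \bfM^{(A)}_u(i,j)$ give precisely the same linear combination spanning each row $\bfr_u$. Your explicit remark that concatenation preserves well-matchedness (so multiplicativity applies to all index pairs of $\nHa_f$) is left implicit in the paper but is the same underlying observation.
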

\begin{proof}
We describe infinite row vectors 
$\mathbf{v}^{(i,j)}$ where $i,j \in [n]$, whose entries are indexed by well-nested words $w \in \WNW(\Sigma)$, 
and show they span the rows of $\nHa_f$.
We define the entry of $\bfv^{(i,j)}$ associated with $w$ to be 
$$
\bfv^{(i,j)}(w) = \bfalpha^T \cdot \bfA^{(i,j)} \bfM^{(A)}_{w} \cdot \bfeta
$$
Note that there are $n^2$ such vectors.
Now let $u \in \WNW(\Sigma)$ be a well-nested word and let $\bfM^{(A)}_u$ be the matrix computed for $u$ by $A$
as described in the . 
Then the row $\bfr_u$ corresponding to $u$ 
in $\nHa_f$ has entries 
$$
\bfr_u(w) = \bfalpha^T \cdot \bfM^{(A)}_u \cdot \bfM^{(A)}_w \cdot \bfeta
$$ 
We show this row is linearly spanned by the vectors $\bfv^{(i,j)}$, $i,j \in [n]$.
Consider the linear combination
$$
\bfv_u = \sum_{1 \leq i,j \leq n}{\bfM^{(A)}_{u}(i,j) \cdot \bfv^{(i,j)}}.
$$
Then
\begin{align*}
\bfv_u(w) &= \sum_{1 \leq i,j \leq n}{\bfM^{(A)}_{u}(i,j) \cdot \bfv^{(i,j)}(w)} = \sum_{1 \leq i,j \leq n}{\bfM^{(A)}_{u}(i,j) 
\cdot \lb(\bfalpha^T \cdot \bfA^{(i,j)} \bfM^{(A)}_w \cdot \bfeta\rb)} \\&
= \bfalpha^T \cdot \bfM^{(A)}_u \cdot \bfM^{(A)}_w \cdot \bfeta  =  \bfr_u(w)
\end{align*}
Therefore the rank of $\nHa_f$ is at most $n^2$.
\end{proof}

\section{Finite Rank of Nested Hankel Matrix Implies Recognizability}
\label{subs:r2a}
Here we prove the second direction of \autoref{th:nested}. 
This will be done by defining a weighted VPA which recognizes the function
$f$
described by a given Hankel matrix $\nHa_f$. It will hold that if the rank of $\nHa_f$ is $\leq n^2$, 
the automaton will have at most $n$ states, and a stack alphabet $\Gamma$ of size at most $n$. 

We first describe the initial and final vectors $\bfalpha,\bfeta$ and the matrices that will be used to construct
the automaton, and prove useful properties for them.

As we cannot decompose the Hankel matrix entries in arbitrary positions, but only in ways that maintain the well-nesting,
we will need to use the following theorem to show the matrices used in the construction of the automaton indeed exist:
\begin{thm}[The SVD Theorem, see \cite{bk:Golub12}]
\label{th:SVD}
Let $\bfN \in \cF^{m \times n}$ be a non-zero matrix, where $\cF = \bR$ or $\cF = \bC$. 
Then there exist vectors $\bfx_1,\ldots,\bfx_m \in \cF^m$ and $\bfy_1,\ldots,\bfy_n \in \cF^n$ such that the
matrices
\begin{align*}
\mathbf{X} = [\bfx_1  \ldots  \bfx_m] \in \cF^{m \times m}, \ \ \ \ 
\mathbf{Y} = [\bfy_1  \ldots  \bfy_n] \in \cF^{n \times n}
\end{align*}
are orthogonal, and
$$
\mathbf{Y}^T \bfN \mathbf{X} = diag(\sigma_1, \ldots, \sigma_p) \in \cF^{m \times n}
$$
where $p = \min \{m,n\}$, $diag(\sigma_1, \ldots, \sigma_p)$
is a diagonal matrix with the values $\sigma_1, \ldots, \sigma_p$ on its diagonal,
and $\sigma_1 \geq \sigma_2 \geq \ldots \geq \sigma_p$.
\end{thm}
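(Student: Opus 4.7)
I would prove the SVD by applying the spectral theorem to the Hermitian positive semi-definite matrix $\bfN^*\bfN \in \cF^{n \times n}$.

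The spectral theorem supplies orthonormal eigenvectors $\bfy_1, \ldots, \bfy_n \in \cF^n$ with non-negative eigenvalues $\lambda_1 \geq \ldots \geq \lambda_n \geq 0$; the singular values are $\sigma_i := \sqrt{\lambda_i}$, already in the required descending order. For each $i$ with $\sigma_i > 0$, I would define $\bfx_i := \frac{1}{\sigma_i}\bfN\bfy_i \in \cF^m$. A direct calculation gives $\bfx_i^* \bfx_j = \frac{\bfy_i^* \bfN^* \bfN \bfy_j}{\sigma_i\sigma_j} = \frac{\lambda_j \delta_{ij}}{\sigma_i\sigma_j} = \delta_{ij}$, so these vectors are orthonormal. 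For $i$ with $\sigma_i = 0$, the identity $\|\bfN\bfy_i\|^2 = \lambda_i = 0$ forces $\bfN\bfy_i = 0$, and I would complete $\{\bfx_i : \sigma_i > 0\}$ to an orthonormal basis of $\cF^m$ by Gram--Schmidt on any complementary subspace.

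Stacking these vectors as columns of $\mathbf{X} \in \cF^{m \times m}$ and $\mathbf{Y} \in \cF^{n \times n}$ yields orthogonal (unitary, when $\cF = \bC$) matrices. An entry-wise computation then gives $(\mathbf{X}^* \bfN \mathbf{Y})_{ij} = \bfx_i^* \bfN \bfy_j = \sigma_j \delta_{ij}$: when $\sigma_j > 0$, use the defining relation $\bfN\bfy_j = \sigma_j \bfx_j$ together with the orthonormality of the $\bfx_i$'s; when $\sigma_j = 0$, both sides vanish since $\bfN\bfy_j = 0$. Hence $\mathbf{X}^* \bfN \mathbf{Y} = \mathrm{diag}(\sigma_1, \ldots, \sigma_p)$ for $p = \min(m, n)$, which is the claimed decomposition (up to transposition conventions).

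The main non-trivial ingredient is the spectral theorem for Hermitian matrices; once granted, the rest of the argument is essentially bookkeeping about eigendecompositions. Over $\bC$, the symbol ``$\mathbf{Y}^T$'' in the statement should be read as the conjugate transpose $\mathbf{Y}^*$ (with ``orthogonal'' meaning ``unitary''), so that the diagonal entries remain real and non-negative; over $\bR$, ordinary transpose and orthogonality suffice.
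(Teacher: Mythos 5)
Your proof is correct, but note that the paper does not prove this statement at all: \autoref{th:SVD} is quoted as a known result with a pointer to \cite{bk:Golub12}, and the paper uses it as a black box (in \autoref{def:vectors} for the rank-one case $\bfN = \bfx\bfy^T$ and in \autoref{def:call_ret_def} for the SVD expansion). So there is no in-paper argument to compare against; what you have supplied is the standard eigendecomposition proof, applying the spectral theorem to the Hermitian positive semi-definite matrix $\bfN^*\bfN$, setting $\sigma_i = \sqrt{\lambda_i}$, defining $\bfx_i = \sigma_i^{-1}\bfN\bfy_i$ on the non-null part and completing by Gram--Schmidt. This is sound: the orthonormality computation, the treatment of the $\sigma_j = 0$ columns via $\|\bfN\bfy_j\|^2 = \lambda_j = 0$, and the entrywise verification $(\mathbf{X}^*\bfN\mathbf{Y})_{ij} = \sigma_j\delta_{ij}$ are all complete (rank considerations guarantee $\lambda_i = 0$ for $i > \min\{m,n\}$, so the rectangular diagonal shape works out). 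For comparison, the proof in Golub and Van Loan itself takes a different, variational route: induct on dimension, taking unit vectors with $\bfN\bfx = \sigma_1\bfy$ where $\sigma_1 = \|\bfN\|_2$, extending to orthogonal matrices, and showing the first row and column of the transformed matrix split off; your route instead leans on the spectral theorem as the single non-trivial ingredient, which is arguably cleaner but less self-contained. Two of your side remarks are genuinely valuable corrections to the statement as printed: the product $\mathbf{Y}^T\bfN\mathbf{X}$ is not dimensionally conformable unless $m = n$ (with $\mathbf{Y} \in \cF^{n\times n}$ and $\bfN \in \cF^{m \times n}$ it should read $\mathbf{X}^T\bfN\mathbf{Y}$, matching the expansion $\bfN = \sum_i \sigma_i\bfx_i\bfy_i^T$ the paper derives from it), and over $\bC$ ``orthogonal'' and $T$ must be read as ``unitary'' and conjugate transpose $*$, otherwise the guarantee $\sigma_1 \geq \cdots \geq \sigma_p \geq 0$ of real ordered diagonal entries fails.
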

As a consequence, if we define $r$ by $\sigma_1 \geq \ldots \geq \sigma_r > \sigma_{r+1} = \ldots = 0$, that is the number of non-zero
entries in $diag(\sigma_1, \ldots, \sigma_p)$, then 
we have the \emph{SVD expansion} of $\bfN$:
$$
\bfN = \sum_{i=1}^{r}{\sigma_i \bfx_i \bfy_i^T}
$$
In particular, if $\bfN$ is of rank $1$, then $\bfN = \bfx \bfy^T$.
 
The SVD is perhaps the most important factorization for real and complex matrices. It is used in matrix approximation techniques,
signal processing, computational statistics, and many more areas. See \cite{ar:Klema80,bk:Poularikas10,bk:Gentle09} and references therein.

\subsubsection{The Components of the Automaton}
Let $\Sigma = \Sigma_{{call}} \cup \Sigma_{{ret}} \cup \Sigma_{{int}}$,
where $\Sigma_{{call}}, \Sigma_{{ret}}$ and $\Sigma_{{int}}$ are disjoint.
Throughout this section, let $f: \WNW(\Sigma) \rightarrow \cF$ be a function on well-nested words over $\Sigma$, 
and let its nested Hankel matrix $\nHa_f$ have finite rank $r(\nHa_f) \leq n^2$. 
Denote by $\cB = \{w_{1,1}, \ldots, w_{n,n}\}$ the well-nested words whose rows linearly span $\nHa_f$. 

\begin{definition}[Initial and final vectors]
\label{def:vectors}
Let the matrix $\bfN \in \cF^{n \times n}$ be defined as
$\bfN({i,j}) = f(w_{1,j})$, and let $\bfx, \bfy \in \cF^n$ be vectors such that $\bfN = \bfx \bfy^T$. 
Let 
\begin{equation}
\bfeta = \bfy, \qquad \bfalpha = \bfx
\label{eq:vecs_def}
\end{equation}
\end{definition}
Note that this definition is sound; as $\bfN$ has rank $1$, Theorem \ref{th:SVD} guarantees there exist such
vectors $\bfx, \bfy$.
\begin{definition}[Internal matrices]
\label{def:internal}
For $w_{i,j} \in \cB$, define $\beta_{i,j} = f(w_{i,j})f(w_{1,j})^{-1}$, and let
\begin{equation}
\bfM_{w_{i,j}} = \beta_{i,j} \cdot \bfA^{(i,j)}
\label{eq:w_ij_def}
\end{equation}
Note that for $w_{1,j}$, we have $\beta_{1,j} = 1$ and $\bfM_{w_{1,j}} = \bfA^{(1,j)}$.

For a letter $a \in \Sigma_{int}$,
let $\bfr_a$ denote the row in $\nHa_f$ corresponding to $a$.
The rows indexed by the elements in $\cB$ span the matrix, so there is a linear combination of them equal to $\bfr_a$:
$$
\bfr_a = \sum_{1 \leq i,j \leq n}{z_a^{i,j} \cdot \bfr_{w_{i,j}}}
$$
Set
\begin{equation}
\bfM_a = \sum_{1 \leq i,j\leq n}{z_a^{i,j} \cdot \bfM_{w_{i,j}}}
\label{eq:int_a_def}
\end{equation}
\end{definition}
Before we define the call and return matrices, we show the above defined vectors and matrices behave as expected:
\begin{lem}
\label{lem:def_first}
Let $\bfalpha,\bfeta \in \cF^n$, $\bfM_{w_{i,j}} \in \cF^{n \times n}$ for $w_{i,j} \in \cB$, and 
$\bfM_a$ for $a \in \Sigma_{{int}}$ be defined as in Definitions \ref{def:vectors} and \ref{def:internal}.
It holds that:
\begin{align}
f(w_{i,j}) &= \bfalpha^T \cdot \bfM_{w_{i,j}} \cdot \bfeta
\label{eq:lm1}
\\
f(a) &= \bfalpha^T \cdot \bfM_a \cdot \bfeta
\label{eq:lm2}
\end{align}
\end{lem}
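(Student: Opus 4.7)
My plan is to verify both identities by direct computation, exploiting the very special structure that the matrix $\bfN$ is rank $1$ by construction. The key preliminary observation I would make is that $\bfN(i,j) = f(w_{1,j})$ depends only on $j$, so every row of $\bfN$ equals the vector $(f(w_{1,1}),\ldots,f(w_{1,n}))$; in particular $\bfN$ has rank at most $1$, and hence by the SVD theorem may be written as $\bfN = \bfx\bfy^T$ for some $\bfx,\bfy \in \cF^n$ satisfying $x_i y_j = f(w_{1,j})$ for all $i,j$. Since $\bfalpha^T \bfA^{(i,j)} \bfeta = \bfx^T \bfA^{(i,j)} \bfy = x_i y_j$, it follows immediately that
$$
\bfalpha^T \bfA^{(i,j)} \bfeta \;=\; f(w_{1,j}).
$$

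With this identity in hand, equation \eqref{eq:lm1} would follow by a one-line calculation: by \eqref{eq:w_ij_def} and the definition of $\beta_{i,j}$,
$$
\bfalpha^T \bfM_{w_{i,j}} \bfeta \;=\; \beta_{i,j} \cdot \bfalpha^T \bfA^{(i,j)} \bfeta \;=\; \beta_{i,j} \cdot f(w_{1,j}) \;=\; \frac{f(w_{i,j})}{f(w_{1,j})}\cdot f(w_{1,j}) \;=\; f(w_{i,j}).
$$
For \eqref{eq:lm2}, I would use linearity together with \eqref{eq:lm1}: by \eqref{eq:int_a_def},
$$
\bfalpha^T \bfM_a \bfeta \;=\; \sum_{i,j} z_a^{i,j}\, \bfalpha^T \bfM_{w_{i,j}} \bfeta \;=\; \sum_{i,j} z_a^{i,j}\, f(w_{i,j}).
$$
To identify this sum with $f(a)$, I would evaluate the row identity $\bfr_a = \sum_{i,j} z_a^{i,j} \bfr_{w_{i,j}}$ at the column indexed by the empty word $\varepsilon$; the left-hand entry is $f(a \cdot \varepsilon) = f(a)$, while the right-hand entries are $f(w_{i,j})$, giving exactly the required equality.

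There is really no main obstacle in this lemma; it is essentially a bookkeeping verification that the definitions line up as advertised, and it is clear because $\bfN$ was engineered to have rank $1$. The one delicate point I would need to flag is that the definition of $\beta_{i,j}$ implicitly requires $f(w_{1,j}) \neq 0$, which must be ensured by a suitable choice of the spanning basis $\cB$ (for instance, by choosing the row indexed $w_{1,*}$ to avoid vanishing rows). The genuine difficulty in the overall argument is deferred to the construction of the call and return matrices, where the SVD theorem will be used non-trivially rather than in the degenerate rank-$1$ form it takes here.
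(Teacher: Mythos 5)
Your proof is correct and takes essentially the same route as the paper's: both derive $\bfalpha(i)\bfeta(j) = f(w_{1,j})$ from the rank-one factorization $\bfN = \bfx\bfy^T$, compute $\bfalpha^T \bfM_{w_{i,j}} \bfeta = \beta_{i,j}\, f(w_{1,j}) = f(w_{i,j})$ for Equation~\ref{eq:lm1}, and get Equation~\ref{eq:lm2} by linearity, where your evaluation of the row identity at the column indexed by $\varepsilon$ merely makes explicit a step the paper leaves implicit. The caveat you flag, that $\beta_{i,j}$ requires $f(w_{1,j}) \neq 0$, is likewise tacitly assumed and not addressed in the paper's proof.
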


\begin{proof}
Since the entries of $\bfM_{w_{i,j}}$ are zero except for entry $(i,j)$, we have
\begin{equation*}
\bfalpha^T \cdot \bfM_{w_{i,j}} \cdot \bfeta =  \bfalpha(i) \cdot \beta_{i,j} \cdot \bfeta(j)
\end{equation*} 
Since $\bfalpha(i)\bfeta(j) = f(w_{1,j})$, we have
\begin{align*}
\bfalpha^T \cdot \bfM_{w_{i,j}} \cdot \bfeta &= \beta_{i,j} \cdot f(w_{1,j}) = f(w_{i,j}) \cdot f(w_{1,j})^{-1} \cdot f(w_{1,j}) = f(w_{i,j})
\end{align*}
and Equation \ref{eq:lm1} holds.

By the definition of $\bfM_a$, we have
\begin{align*}
\bfalpha^T \cdot \bfM_a \cdot \bfeta &
= \bfalpha^T \lb(\sum_{1 \leq i,j\leq n}{ z_a^{i,j} \cdot \bfM_{w_{i,j}}}\rb) \bfeta 
=
\sum_{1 \leq i,j\leq n}{z_a^{i,j} \lb(\bfalpha^T \bfM_{w_{i,j}} \bfeta\rb)} \\&
=
\sum_{1 \leq i,j\leq n}{z_a^{i,j} \cdot f(w_{i,j})}
= f(a)
\end{align*}
and Equation \ref{eq:lm2} holds.
\end{proof}

\begin{definition}[Call and return matrices]
\label{def:call_ret_def}
For each pair $c \in \Sigma_{call}$ and $r \in \Sigma_{ret}$, define the $n \times n$ matrix $\bfN_{c,r}$ as
$\bfN_{c,r}(i,j) = f(c w_{i,j} r)/\beta_{i,j}$
and let its SVD be
$$
\bfN_{c,r} = \sum_{k=1}^{n}{\bfp_{c,k} (\bfp_{r,k})^T}
$$
For $\gamma \in \Gamma$, define
$$
\bfM_{call}^{c,\gamma}(\ell,i)
=
\begin{cases}
\bfp_{c,\gamma}(i)/\bfalpha(\gamma) & \ell=\gamma \\
0 & \text{else}
\end{cases}
$$
and 
$$
\bfM_{ret}^{r,\gamma}(j,m)
=
\begin{cases}
\bfp_{r,\gamma}(j)/\bfeta(\gamma)  & m=\gamma \\
0 & \text{else}
\end{cases}
$$

\end{definition}
Now we show these matrices behave as expected:
\begin{lem}
\label{lem:matrices}
Let $\bfM_{call}^{(c,\gamma)}$ and $\bfM_{ret}^{(r,\gamma)}$ for  $c\in \Sigma_{{call}}$, $r \in \Sigma_{{ret}}$ be defined
as in Definition \ref{def:call_ret_def}. Then it holds that:
\begin{equation}
f(c w_{i,j} r) = \bfalpha^T \left(\sum_{\gamma =1}^{n}{ \bfM_{call}^{(c,\gamma)} \cdot \bfM_{w_{i,j}} \cdot \bfM_{ret}^{(r,\gamma)} }\right)\bfeta
\label{eq:lm3}
\end{equation}
\end{lem}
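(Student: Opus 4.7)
The plan is a direct computation that leverages the extreme sparsity of the three matrices being multiplied. First I would observe that $\bfM_{w_{i,j}} = \beta_{i,j} \bfA^{(i,j)}$ has only a single non-zero entry, at position $(i,j)$; that $\bfM_{call}^{(c,\gamma)}$ is supported only on row $\gamma$; and that $\bfM_{ret}^{(r,\gamma)}$ is supported only on column $\gamma$. Multiplying these out, the product $\bfM_{call}^{(c,\gamma)} \cdot \bfM_{w_{i,j}} \cdot \bfM_{ret}^{(r,\gamma)}$ must therefore have at most one non-zero entry, sitting at position $(\gamma,\gamma)$. Plugging in the formulas from Definition~\ref{def:call_ret_def} and Definition~\ref{def:internal}, this entry equals
$$
\frac{\bfp_{c,\gamma}(i)}{\bfalpha(\gamma)} \cdot \beta_{i,j} \cdot \frac{\bfp_{r,\gamma}(j)}{\bfeta(\gamma)}.
$$

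Next I would sum over $\gamma \in [n]$ to obtain a diagonal matrix whose $(\gamma,\gamma)$ entry is the quantity above. Sandwiching this diagonal matrix between $\bfalpha^T$ on the left and $\bfeta$ on the right produces
$$
\sum_{\gamma=1}^{n} \bfalpha(\gamma) \cdot \frac{\bfp_{c,\gamma}(i) \, \beta_{i,j} \, \bfp_{r,\gamma}(j)}{\bfalpha(\gamma)\bfeta(\gamma)} \cdot \bfeta(\gamma) \;=\; \beta_{i,j} \sum_{\gamma=1}^{n} \bfp_{c,\gamma}(i)\,\bfp_{r,\gamma}(j),
$$
where the factors $\bfalpha(\gamma)$ and $\bfeta(\gamma)$ cancel cleanly. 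This cancellation is precisely the reason the definitions of the call and return matrices were normalized by $\bfalpha(\gamma)$ and $\bfeta(\gamma)$ in the first place.

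Finally, I would invoke the SVD expansion of $\bfN_{c,r}$ supplied in Definition~\ref{def:call_ret_def}, which says exactly that $\sum_{\gamma=1}^{n} \bfp_{c,\gamma}(i)\bfp_{r,\gamma}(j) = \bfN_{c,r}(i,j)$. By definition $\bfN_{c,r}(i,j) = f(c w_{i,j} r)/\beta_{i,j}$, so the remaining $\beta_{i,j}$ cancels and equation~(\ref{eq:lm3}) follows.

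The only genuine subtlety I foresee is well-definedness: the denominators $\bfalpha(\gamma)$ and $\bfeta(\gamma)$ appearing in Definition~\ref{def:call_ret_def} must be non-zero for every $\gamma \in [n]$. This is a constraint on the choice of the rank-one factorization $\bfN = \bfx\bfy^T$ produced for Definition~\ref{def:vectors} and on the choice of the basis $\cB$; the computation above is otherwise purely mechanical. I expect this to be handled either implicitly (by a generic choice of factorization) or by passing to a slightly enlarged spanning set, but it is worth flagging as the one place where the construction is not entirely automatic.
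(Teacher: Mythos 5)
Your proposal is correct and follows essentially the same route as the paper's own proof: exploit the sparsity of the three factors to see that each summand $\bfM_{call}^{(c,\gamma)} \cdot \bfM_{w_{i,j}} \cdot \bfM_{ret}^{(r,\gamma)}$ is supported at the single entry $(\gamma,\gamma)$, cancel the $\bfalpha(\gamma)$ and $\bfeta(\gamma)$ normalizations when sandwiching, and then invoke the SVD expansion $\bfN_{c,r}(i,j)=\sum_{\gamma}\bfp_{c,\gamma}(i)\bfp_{r,\gamma}(j)$ to recover $f(cw_{i,j}r)$. Your flag about the non-vanishing of the denominators $\bfalpha(\gamma)$, $\bfeta(\gamma)$ (and of $f(w_{1,j})$ in the definition of $\beta_{i,j}$) is a fair observation, but the paper leaves this implicit as well, so it does not distinguish your argument from theirs.
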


\begin{proof}
By the definition of $\bfN_{c,r}$ we have:
\begin{equation}
\bfN_{c,r}(i,j) = f(c w_{i,j} r)/\beta_{i,j} = \sum_{k=1}^{n}{\bfp_{c,k}(i) \bfp_{r,k}(j)}
\label{eq:2}
\end{equation}
In addition, $\bfM_{w_{ij}}$ is zero in all entries that are not the $(i,j)$ one. Therefore,
$$
(\bfM_{call}^{c,\gamma} \cdot \bfM_{w_{ij}})(\ell,m) = 
\begin{cases}
\bfM_{call}^{c,\gamma}(\gamma,i) \beta_{i,j} = \bfp_{c,\gamma}(i)\beta_{i,j}/ \bfalpha(\gamma) & \ell=\gamma, m=j \\
0 & \text{else}
\end{cases}
$$

Thus multiplying with $\bfM_{ret}^{r,\gamma}$ results in:
$$
(\bfM_{call}^{c,\gamma} \cdot \bfM_{w_{ij}} \cdot \bfM_{ret}^{r,\gamma})(\ell,m)
= 
\begin{cases}
(\bfp_{c,\gamma}(i) \beta_{i,j} \bfp_{r,\gamma}(j))/(\bfalpha(\gamma)\bfeta(\gamma)) & \ell=m=\gamma \\
0 & \text{else}
\end{cases}
$$
Note that the above matrix $\bfM_{call}^{c,\gamma} \cdot \bfM_{w_{ij}} \cdot \bfM_{ret}^{r,\gamma}$ is diagonal.
Therefore, in total:
\begin{align*}
\bfalpha^T \left(\sum_{\gamma=1}^{n}{\bfM_{call}^{c,\gamma} \bfM_{w_{i,j}} \bfM_{ret}^{r,\gamma}}\right) \bfeta &= 
\beta_{i,j} \sum_{\gamma=1}^{n}{\bfp_{c,\gamma}(i)\bfp_{r,\gamma}(j)} 
\\&\underset{Equation \ref{eq:2}}{=}
\beta_{i,j} (f(c w_{i,j} r)/\beta_{i,j}) 
= f(c w_{i,j} r)
\end{align*}

so Equation \ref{eq:lm3} holds.
\end{proof}

\subsubsection{Putting the Automaton Together}
We are now ready to prove the second direction of \autoref{th:nested}:
\begin{thm}
\label{th:finite2vpa}
Let $f: \WNW(\Sigma) \rightarrow \cF$ have a nested Hankel matrix $\nHa_f$ of rank $\leq n^2$. Then $f$ is recognizable by a weighted visibly pushdown automaton $A$ with $n$ states.
\end{thm}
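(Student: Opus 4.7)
The construction of the automaton is essentially complete in Definitions \ref{def:vectors}, \ref{def:internal}, and \ref{def:call_ret_def}. My plan is to set $\Gamma = [n]$ and assemble the WVPA $A = (n, \bfalpha, \bfeta, \Gamma, \mathrm{M})$ from the vectors and matrices defined there, and then to verify $f_A = f$ by structural induction on well-nested words. Structurally, $\WNW(\Sigma)$ is generated by $\varepsilon$, internal letters, concatenation $u_1 u_2$ of well-nested words, and call-wrapping $c u' r$, so the induction will need to address each constructor.

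The naive target $\bfalpha^T \bfM^{(A)}_u \bfeta = f(u)$ does not suffice as an inductive hypothesis, because in the closure rule $\bfM^{(A)}_{cu'r} = \sum_\gamma \bfM_{call}^{(c,\gamma)} \bfM^{(A)}_{u'} \bfM_{ret}^{(r,\gamma)}$ the matrix $\bfM^{(A)}_{u'}$ sits inside the product rather than contracted against $\bfalpha$ and $\bfeta$, and \autoref{lem:matrices} is available only when the body is one of the basis words $w_{i,j}$. I would therefore strengthen the hypothesis to assert that for every well-nested $u$ there exist coefficients $d_{i,j}^{u}$ with
\[
\bfr_u = \sum_{1 \le i,j \le n} d_{i,j}^{u}\, \bfr_{w_{i,j}} \quad \text{in } \nHa_f, \qquad \bfM^{(A)}_u = \sum_{1 \le i,j \le n} d_{i,j}^{u}\, \bfM_{w_{i,j}}.
\]
With this invariant in hand, the induction proceeds along the grammar: the cases $u = \varepsilon$ and $u \in \Sigma_{int}$ are immediate from the construction in Definition \ref{def:internal} and from \autoref{lem:def_first}; for concatenation $u = u_1 u_2$ one multiplies the decompositions of $\bfM^{(A)}_{u_1}$ and $\bfM^{(A)}_{u_2}$ and reads off coefficients $d_{i,j}^{u_1 u_2}$; and for call-wrapping $u = c u' r$ one applies \autoref{lem:matrices} term by term against $\sum d_{i,j}^{u'} \bfM_{w_{i,j}}$ to obtain $\bfalpha^T \bfM^{(A)}_u \bfeta = \sum_{i,j} d_{i,j}^{u'} f(c w_{i,j} r)$. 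Specializing the invariant at the top of the induction then yields $f(u) = \bfalpha^T \bfM^{(A)}_u \bfeta = f_A(u)$, which is the claim.

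The main obstacle will be verifying that the linear combinations produced by the recursive matrix construction genuinely track the row-span structure of $\nHa_f$. In the concatenation step one must show that the coefficients extracted from $\bfM^{(A)}_{u_1} \bfM^{(A)}_{u_2}$ really are spanning coefficients for $\bfr_{u_1 u_2}$, which amounts to relating the bilinearity of matrix multiplication to the effect of prepending $u_1$ on rows of $\nHa_f$. In the call-wrapping step one must similarly verify $f(c u' r) = \sum_{i,j} d_{i,j}^{u'} f(c w_{i,j} r)$, namely that flanking by $c$ and $r$ preserves the row-span identity satisfied by $\bfr_{u'}$ relative to $\cB$. Both propagation facts rely crucially on the choice of $\cB$ as a row basis of the entire $\nHa_f$ and on the finiteness of its rank, and this is where the well-nested restriction does real work, since an arbitrary split of a nested word need not factor into two well-nested pieces and so the usual Carlyle--Paz argument cannot be transferred verbatim.
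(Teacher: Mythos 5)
Your proposal follows essentially the same route as the paper: assemble $A$ from Definitions \ref{def:vectors}, \ref{def:internal}, and \ref{def:call_ret_def}, then verify $f_A = f$ by propagating a spanning invariant over the structure of well-nested words --- after the normalization $d^u_{i,j} = \bfM^{(A)}_u(i,j)/\beta_{i,j}$, your pair of identities $\bfM^{(A)}_u = \sum_{i,j} d^u_{i,j}\,\bfM_{w_{i,j}}$ and $\bfr_u = \sum_{i,j} d^u_{i,j}\,\bfr_{w_{i,j}}$ is exactly the paper's identity $\bfr_u = \sum_{i,j}\bfM^{(A)}_u(i,j)\,\bfv^{(i,j)}$, with your concatenation step being the paper's one explicit computation and the call--return step delegated, as in the paper, to \autoref{lem:matrices}. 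The ``propagation'' obstacles you flag at the end are precisely the points the paper itself leaves implicit (it asserts $\bfr_{w_{i,j}} = \bfv^{(i,j)}$ up to a constant factor without proof, and \autoref{lem:matrices} supplies only the contracted scalar identity $f(c\,w_{i,j}\,r) = \bfalpha^T(\sum_\gamma \cdots)\bfeta$ rather than the full row identity), so your plan is faithful to, and no less complete than, the published argument.
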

\begin{proof}
Use Definitions \ref{def:vectors}, \ref{def:internal}, and \ref{def:call_ret_def} to build a weighted VPA $A$ with $n$ states, and set $\bfM^{(A)}_\varepsilon = \bfI$. 
From Lemmas \ref{lem:def_first} and \ref{lem:matrices}
it only remains to show that for $u,t \in \WNW(\Sigma)$,
$$
\bfM^{(A)}_{ut} = \bfM^{(A)}_u \cdot \bfM^{(A)}_t
$$ 

Note that we defined the matrices $\bfM^{(A)}_{w_{i,j}}$ such
that $\bfr_{w_{i,j}} = \bfv^{(i,j)}$ up to a constant factor. 
We show that if 
\begin{align*}
\bfr_{u} = \sum_{1 \leq i,j \leq n}{\bfM^{(A)}_u(i,j) \cdot \bfv^{(i,j)}}
\quad \text{and} \quad
\bfr_t = \sum_{1 \leq i,j \leq n}{\bfM^{(A)}_t(i,j) \cdot \bfv^{(i,j)}},
\end{align*}
then 
$$
\bfr_{ut} = \sum_{1 \leq i,j \leq n}{(\bfM^{(A)}_u \cdot \bfM^{(A)}_t)(i,j) \cdot \bfv^{(i,j)}}
$$
Or, equivalently, that for every well-nested word $w \in \WNW(\Sigma)$,
$$
\bfr_{ut}(w) = \bfalpha^T \cdot \bfM^{(A)}_u \cdot \bfM^{(A)}_t \cdot \bfM^{(A)}_w \cdot \bfeta
$$
Consider the linear combination:
\begin{align*}
\bfv_{ut} &= \sum_{1 \leq i,j \leq n}{ (\bfM^{(A)}_u \cdot \bfM^{(A)}_t)(i,j) \cdot \bfv^{(i,j)}}
 = \sum_{1 \leq i,k,j \leq n}{ \bfM^{(A)}_u(i,k) \cdot \bfM^{(A)}_t(k,j) \cdot \bfv^{(i,j)}}
\end{align*}
Then, for $w \in \WNW(\Sigma)$ we have
\begin{align*}
\bfv_{ut}(w) &= \sum_{1 \leq i,k,j \leq n}{ \bfM^{(A)}_u(i,k) \cdot \bfM^{(A)}_t(k,j) \cdot \bfv^{(i,j)}(w)} \\&= 
\sum_{1 \leq i,k,j \leq n}{\bfM^{(A)}_u(i,k) \cdot \bfM^{(A)}_t(k,j) \cdot \lb(\bfalpha^T \cdot \bfA^{(i,j)} \bfM^{(A)}_w \cdot \bfeta\rb)} &
\end{align*}
Note that the row $i$ of $\bfA^{(i,j)} \bfM^{(A)}_w$ is row $j$ of $\bfM^{(A)}_w$ and all other rows are zero.
Then
\begin{align*}
\bfv_{ut}(w) &= \sum_{1 \leq i,k,j \leq n}{\bfM^{(A)}_u(i,k) \cdot \bfM^{(A)}_t(k,j) \cdot 
\lb(\sum_{l=1}^{n}{\bfalpha(i) \cdot \bfM^{(A)}_w(j,l) \cdot \bfeta(l)}\rb)}
\\&
= \sum_{1 \leq i,k,j,l \leq n}{\bfalpha(i) \cdot \bfM^{(A)}_u(i,k) \cdot \bfM^{(A)}_t(k,j) \cdot \bfM^{(A)}_w(j,l) \cdot \bfeta(l)}
\\&
= \bfalpha^T \cdot \bfM^{(A)}_u \cdot \bfM^{(A)}_t \cdot \bfM^{(A)}_w \cdot \bfeta =  \bfr_{ut}(w)
\end{align*}
\end{proof}

From Theorem \ref{th:finite2vpa} and Theorem \ref{th:vpa2finite} we have our main result, Theorem \ref{th:nested}.

\subsubsection*{Acknowledgments.} We thank Boaz Blankrot for helpful discussions on matrix decompositions and the anonymous referees for valuable feedback.

\bibliographystyle{abbrv}
\bibliography{lata2016_ref}

\end{document}